\documentclass[11pt,a4paper]{article}

\usepackage[utf8]{inputenc}
\usepackage[T1]{fontenc}
\usepackage{lmodern}

\usepackage{amsmath,amssymb,amsthm}
\usepackage{bm}
\usepackage{mathtools}

\usepackage{graphicx}

\usepackage{booktabs}
\usepackage{array}

\usepackage[margin=2.5cm]{geometry}
\usepackage{setspace}
\usepackage{parskip}

\usepackage[authoryear,round]{natbib}

\usepackage[colorlinks=true,linkcolor=blue,citecolor=blue,urlcolor=blue]{hyperref}
\usepackage{cleveref}

\newtheorem{proposition}{Proposition}

\theoremstyle{definition}
\newtheorem{definition}{Definition}
\theoremstyle{remark}
\newtheorem*{remark}{Remark}

\newcommand{\dd}{\mathrm{d}}
\newcommand{\E}{\mathbb{E}}

\newcommand{\tw}{\tau_w}
\newcommand{\tc}{\tau_c}
\newcommand{\td}{\tau_d}
\newcommand{\tk}{\tau_k}
\newcommand{\rf}{r_f}

\title{Heterogeneous Returns and Wealth Tax Neutrality:\\
A Fokker--Planck Framework}
\author{Anders G Fr{\o}seth\thanks{Independent Researcher.
  E-mail: \href{mailto:indrefjorden@pm.me}{indrefjorden@pm.me}.}}
\date{\today}

\begin{document}
\maketitle

\begin{abstract}
We extend the Fokker--Planck framework of \citet{Froeseth2026S} to
populations of investors with heterogeneous, persistent
return-generating ability.  When the drift coefficient in the Langevin
equation for log-wealth varies across investors, the proportional
wealth tax remains a uniform drift shift but ceases to be neutral in
the economic sense: its real incidence differs across ability types,
and the stationary wealth distribution changes shape.  We derive the
extended Fokker--Planck equation on the joint space of log-wealth and
ability, characterise the conditions under which the drift-shift
symmetry breaks, and identify the consequences for asset prices and
portfolio allocations.  The analysis connects the neutrality results of
\citet{Froeseth2026N} and the Fokker--Planck dynamics of
\citet{Froeseth2026S} to the heterogeneous-returns literature,
notably the ``use-it-or-lose-it'' mechanism of
\citet{GuvenEtAl2023}.
\end{abstract}

\section{Introduction}\label{sec:intro}

The distinction between a capital income tax and a wealth tax can be
stated in two lines.  Let $a_i$ denote the wealth of investor~$i$,
$r_i$ the rate of return on that wealth, and $\tau_k$, $\tau_a$ the
flat tax rates on capital income and wealth respectively.  Under a
capital income tax, the after-tax wealth of investor~$i$ is
\begin{equation}\label{eq:income_tax}
  a_i^{\text{after-tax}} = a_i + (1 - \tau_k) \cdot r_i \, a_i \,,
\end{equation}
whereas under a wealth tax, assessed on the end-of-period market
value, it is
\begin{equation}\label{eq:wealth_tax}
  a_i^{\text{after-tax}} = (1 - \tau_a) \cdot (a_i + r_i \, a_i) \,.
\end{equation}
These two expressions, following \citet{GuvenEtAl2023}, contain the
entire distinction.  The income tax~\eqref{eq:income_tax} scales only
the \emph{return} $r_i a_i$: the tax base is the flow.  The wealth
tax~\eqref{eq:wealth_tax} scales the \emph{entire end-of-period
position} $(1 + r_i) a_i$: the tax base is the stock.\footnote{The
end-of-period convention is used throughout this paper series;
see \citet{Froeseth2026N}, Remark~7.7, for the tax-base timing
discussion and the connection to the \citet{Kruschwitz2023} arbitrage
condition.  In a single period the result is identical to
beginning-of-period assessment, since both yield
$(1-\tau_a)(1+r_i)\,a_i$.}

When the rate of return is common across investors---$r_i = r$ for
all~$i$---the two systems are equivalent up to a rate adjustment,
$\tau_a = r\tau_k/(1 + r)$.  Both reduce every investor's after-tax
wealth by the same fraction.  Because the return is homogeneous, there
is no structural difference between taxing the flow and taxing the
stock.

But when $r_i$ varies across investors, the equivalence breaks.  The
income tax burden is proportional to the individual return: a
high-return investor pays more; a zero-return investor pays nothing.
The wealth tax, by contrast, falls on the entire end-of-period
position: even a zero-return investor pays $\tau_a\, a_i$.  The
consequence, over time, is that wealth migrates from low-return to
high-return investors under the wealth tax---an effect absent under
the income tax.  This is the ``use-it-or-lose-it'' mechanism of
\citet{GuvenEtAl2023}.

The present paper develops this observation in the continuous-time
Fokker--Planck framework of
\citet{Froeseth2026N,Froeseth2026E,Froeseth2026S}.  In that
framework, a proportional wealth tax enters as a uniform reduction of
the drift coefficient in the Langevin equation for log-wealth, leaving
the diffusion structure unchanged.  This drift-shift symmetry preserves
the shape of the wealth distribution and the optimal portfolio weights
of all investors.  A central consequence, developed in
\citet{Froeseth2026R}, is the \emph{redistribution paradox}: because
the drift shift is uniform, a proportional wealth tax does not alter
the shape of the stationary wealth distribution.  The tax is
simultaneously non-distortionary and non-redistributive through the
market channel---any redistribution must come through the fiscal
channel (tax revenue spent on transfers or public goods).

This paradox rests on the assumption that all investors face the same
return process.  Individual wealth paths differ because the Brownian
increments are independent draws from a common distribution, not
because investors differ in their capacity to generate returns.  The
resulting inequality arises from luck within a shared stochastic
process.  Under these conditions, the wealth tax cannot alter relative
positions: it shifts the entire distribution uniformly in log-wealth.

Empirically, this assumption is well supported in deep, liquid public
markets.  The fund performance literature shows little evidence of
persistent outperformance among active managers investing in listed
equities \citep{FamaFrench2010}.  After fees, actively managed funds
do not systematically beat passive benchmarks, consistent with the view
that no investor has a durable edge.

However, in less efficient segments of the capital market---venture
capital, private businesses, and entrepreneurial activity---there is clear
evidence of persistent performance differences across managers and
firms.  \citet{KaplanSchoar2005} find that top-quartile venture
capital funds are significantly more likely to raise top-quartile
successor funds.  \citet{KortewegSorensen2017} confirm persistent
skill net of selection effects.  Most directly,
\citet{FagerengEtAl2020} document substantial persistence in
individual rates of return using 20~years of Norwegian administrative
panel data, finding that individual fixed effects explain a large
share of return variation.

This empirical heterogeneity motivates the present paper.  At the
theoretical level, \citet{BenhabibBisinZhu2011} established that
heterogeneous returns---not just earnings risk---are needed to
generate realistic Pareto tails in the wealth distribution.
\citet{CaoLuo2017} show in a general equilibrium model that persistent
return heterogeneity produces a Pareto tail whose index depends on
equilibrium variables, and that changes in corporate taxation and
financial regulation can account for the joint evolution of rising
wealth inequality and declining labour share.  The continuous-time
heterogeneous-agent framework of \citet{AchdouEtAl2022} provides the
mathematical toolkit---Kolmogorov forward equations on the joint space
of wealth and individual states---that we adopt here.

We ask: what happens to the drift-shift symmetry---and hence to the
neutrality result---when investors differ in a persistent,
individual-specific ability parameter that governs their expected
return?  The question is motivated directly by \citet{GuvenEtAl2023},
who build an overlapping-generations model with heterogeneous
entrepreneurial ability and show that wealth taxation dominates capital
income taxation in efficiency terms.  Their ``use-it-or-lose-it'' mechanism---the
reallocation of capital from low-ability to high-ability
entrepreneurs---is absent in homogeneous-agent models and operates
precisely through the differential real incidence of a uniform tax on
investors with heterogeneous returns.

A central finding of the present analysis is that the common objection
to wealth taxation---that it ``punishes skill''---reverses the actual
mechanism.  Under heterogeneous returns, the wealth tax \emph{preserves}
drift differences between high- and low-ability investors; it is the
income tax that compresses them.  The genuine difficulty is subtler:
the wealth tax cannot distinguish \emph{why} returns are persistent.
Skill and market structure interact multiplicatively---a skilled
entrepreneur earns persistent excess returns only because the market
is too inefficient to arbitrage the advantage away---so the tax
simultaneously rewards productive ability and amplifies structural
rents.

We proceed as follows.  \Cref{sec:recap} recaps the homogeneous
framework from \citet{Froeseth2026S}.  \Cref{sec:hetero} introduces
the extended Fokker--Planck equation on the joint space of log-wealth
and ability, derives the marginal wealth dynamics, and characterises the
conditions under which the drift-shift symmetry breaks.
\Cref{sec:neutrality} analyses the consequences for neutrality,
equilibrium prices, and portfolio allocations.
\Cref{sec:flow_stock} extends the analysis to the combined
flow-and-stock tax system of \citet{Froeseth2026F}, showing that the
distinction between taxing the return (flow) and taxing the level
(stock), which is immaterial under homogeneous returns, becomes the
central question under heterogeneous returns.
\Cref{sec:discussion} discusses the empirical domain of each
assumption and identifies directions for further work.

\section{The homogeneous framework: recap}\label{sec:recap}

We briefly recall the setup from \citet{Froeseth2026S}, establishing
notation for the extension that follows.

\subsection{Individual dynamics}

An investor's wealth $W(t)$ evolves under geometric Brownian motion:
\begin{equation}\label{eq:gbm}
  \frac{\dd W}{W} = \mu \, \dd t + \sigma \, \dd B_t \,,
\end{equation}
where $\mu$ is the expected instantaneous return, $\sigma > 0$ is
volatility, and $B_t$ is a standard Brownian motion.  In log-wealth
$x \equiv \ln W$:
\begin{equation}\label{eq:langevin}
  \dd x = v \, \dd t + \sigma \, \dd B_t \,,
  \qquad v \equiv \mu - \frac{\sigma^2}{2} \,.
\end{equation}
This is a Langevin equation with constant drift velocity~$v$ and
additive noise of strength~$\sigma$.

\subsection{Population dynamics}

For a population of $\mathcal{N}$ investors with common
$(\mu, \sigma)$, the density $\pi(x,t)$ of log-wealth evolves
according to the Fokker--Planck equation:
\begin{equation}\label{eq:fp_hom}
  \frac{\partial \pi}{\partial t}
  = -v \frac{\partial \pi}{\partial x}
    + D \frac{\partial^2 \pi}{\partial x^2} \,,
  \qquad D \equiv \frac{\sigma^2}{2} \,.
\end{equation}

\subsection{The drift-shift symmetry}

A proportional wealth tax at rate $\tw$ modifies only the drift:
\begin{equation}\label{eq:fp_tax}
  \frac{\partial \pi}{\partial t}
  = -(v - \tw) \frac{\partial \pi}{\partial x}
    + D \frac{\partial^2 \pi}{\partial x^2} \,.
\end{equation}
The transformation $v \mapsto v - \tw$ is the drift-shift symmetry.
It preserves the diffusion coefficient, the shape of the propagator,
and---when coupled with the asset pricing analysis of
\citet{Froeseth2026N}—the optimal portfolio weights.

The key structural requirement is that the drift shift is
\emph{uniform}: every investor experiences the same reduction $\tw$ in
drift velocity.  This uniformity is what we now relax.

\section{Heterogeneous returns: the extended framework}\label{sec:hetero}

\subsection{Ability as a second state variable}\label{sec:ability}

We introduce a persistent, individual-specific ability parameter $z$
that governs the investor's expected return.  Specifically, replace
\eqref{eq:gbm} with
\begin{equation}\label{eq:gbm_z}
  \frac{\dd W_i}{W_i} = \mu(z_i) \, \dd t + \sigma(z_i) \, \dd B_t^i \,,
  \qquad
  \mu(z_i) \;\equiv\; \E^{\mathbb{P}}\!\bigl[r_i \,\big|\, z = z_i\bigr] \,,
\end{equation}
where the conditional expectation is under the physical
measure~$\mathbb{P}$: investor~$i$ earns a different expected return
because the return-generating process itself depends on ability, not
because investors hold different beliefs about a common process.  The
volatility $\sigma(z)$ is similarly ability-dependent, and $B_t^i$ are
independent Brownian motions.  We allow both drift and diffusion to depend on ability,
though the case $\sigma(z) = \sigma$ (common volatility, heterogeneous
drift) already captures the essential mechanism.

The ability parameter~$z$ is itself stochastic.  We model it as a
diffusion process:
\begin{equation}\label{eq:z_dynamics}
  \dd z_i = -\gamma(z_i - \bar{z}) \, \dd t
    + \eta \, \dd Z_t^i \,,
\end{equation}
where $\gamma > 0$ controls the rate of mean-reversion toward the
population average $\bar{z}$, $\eta > 0$ is the ability noise
strength, and $Z_t^i$ is a Brownian motion independent of $B_t^i$.
The mean-reversion captures the empirical finding that ability
persistence is real but imperfect: top-performing investors tend to
revert over long horizons, particularly across generations.

\begin{remark}[Interpretation of~$z$]
The ability parameter $z$ can represent several distinct economic
mechanisms: genuine skill in identifying undervalued assets or managing
enterprises; informational advantage from networks, experience, or
sector knowledge; access to deal flow or co-investment opportunities
unavailable to the general market; or scale advantages that larger
portfolios command (lower transaction costs, better terms, access to
illiquid strategies).  The mathematical framework is agnostic about the
source of heterogeneity; what matters is that $z$ is persistent and
affects expected returns.  The empirical evidence
\citep{FagerengEtAl2020} suggests that individual fixed effects in
returns are substantial and long-lived, without cleanly identifying
which mechanism dominates.
\end{remark}

\subsection{The joint Fokker--Planck equation}\label{sec:joint_fp}

Define log-wealth $x_i \equiv \ln W_i$.  By It\^{o}'s lemma:
\begin{equation}\label{eq:langevin_z}
  \dd x_i = v(z_i) \, \dd t + \sigma(z_i) \, \dd B_t^i \,,
  \qquad v(z) \equiv \mu(z) - \frac{\sigma(z)^2}{2} \,.
\end{equation}
The state of each investor is the pair $(x_i, z_i)$.  Let
$f(x, z, t)$ denote the joint density: $f(x,z,t)\, \dd x\, \dd z$ is
the fraction of investors with log-wealth in $[x, x+\dd x]$ and
ability in $[z, z+\dd z]$ at time~$t$.

Since $(x_i, z_i)$ satisfies a two-dimensional It\^{o} diffusion with
independent noise sources, the joint density evolves according to the
Fokker--Planck equation:
\begin{equation}\label{eq:fp_joint}
  \boxed{
  \frac{\partial f}{\partial t}
  = -\frac{\partial}{\partial x}\bigl[v(z)\, f\bigr]
    + \frac{\partial^2}{\partial x^2}\bigl[D(z)\, f\bigr]
    + \frac{\partial}{\partial z}\bigl[\gamma(z - \bar{z})\, f\bigr]
    + \frac{\eta^2}{2}\frac{\partial^2 f}{\partial z^2} \,,}
\end{equation}
where $D(z) \equiv \sigma(z)^2/2$ is the ability-dependent diffusion
coefficient.

The first two terms describe the wealth dynamics at fixed ability: an
investor of ability~$z$ drifts in log-wealth at velocity $v(z)$ and
diffuses with coefficient $D(z)$.  The third and fourth terms describe
the ability dynamics: mean-reversion toward $\bar{z}$ at rate~$\gamma$
and diffusion with coefficient $\eta^2/2$.

\begin{remark}[The homogeneous limit]
When $\mu(z) = \mu$ and $\sigma(z) = \sigma$ for all~$z$, the wealth
dynamics decouple from ability.  Integrating \eqref{eq:fp_joint} over
$z$ recovers the homogeneous Fokker--Planck equation
\eqref{eq:fp_hom}, which is the Fokker--Planck formulation of the
multiplicative wealth dynamics studied by
\citet{BouchaudMezard2000}.  The drift-shift symmetry holds in this limit
because the drift is independent of the state variable~$z$ over which
ability is distributed.
\end{remark}

\subsection{The taxed dynamics}\label{sec:taxed}

Under a proportional wealth tax at rate~$\tw$:
\begin{equation}\label{eq:gbm_z_tax}
  \frac{\dd W_i}{W_i} = (\mu(z_i) - \tw) \, \dd t
    + \sigma(z_i) \, \dd B_t^i \,.
\end{equation}
In log-wealth:
\begin{equation}\label{eq:langevin_z_tax}
  \dd x_i = (v(z_i) - \tw) \, \dd t + \sigma(z_i) \, \dd B_t^i \,.
\end{equation}
The taxed joint Fokker--Planck equation is:
\begin{equation}\label{eq:fp_joint_tax}
  \frac{\partial f}{\partial t}
  = -\frac{\partial}{\partial x}\bigl[(v(z) - \tw)\, f\bigr]
    + \frac{\partial^2}{\partial x^2}\bigl[D(z)\, f\bigr]
    + \frac{\partial}{\partial z}\bigl[\gamma(z - \bar{z})\, f\bigr]
    + \frac{\eta^2}{2}\frac{\partial^2 f}{\partial z^2} \,.
\end{equation}
The tax enters as $v(z) \mapsto v(z) - \tw$, exactly as in the
homogeneous case: a uniform shift of the drift coefficient.

\subsection{Where the symmetry breaks}\label{sec:symmetry_break}

The drift-shift transformation $v(z) \mapsto v(z) - \tw$ is still a
symmetry of the Fokker--Planck \emph{operator}: it changes only the
advection term in the $x$-direction.  But the economic content of
neutrality---that the shape of the wealth distribution and the
relative positions of investors are preserved---no longer follows.

To see why, consider the marginal wealth density
$\pi(x,t) = \int f(x,z,t)\, \dd z$.  Integrating
\eqref{eq:fp_joint_tax} over~$z$:
\begin{equation}\label{eq:fp_marginal}
  \frac{\partial \pi}{\partial t}
  = -\frac{\partial}{\partial x}
    \Bigl[\bigl(\langle v(z) \rangle_x - \tw\bigr)\, \pi\Bigr]
    + \frac{\partial^2}{\partial x^2}
    \bigl[\langle D(z) \rangle_x \, \pi\bigr]
    + \underbrace{\frac{\partial}{\partial z}\bigl[\gamma(z - \bar{z})\, f\bigr]
    + \frac{\eta^2}{2}\frac{\partial^2 f}{\partial z^2}}_{\to\, 0} \,,
\end{equation}
where the $z$-flux terms vanish upon integration over~$z$ with natural
boundary conditions, and $\langle \cdot \rangle_x$ denotes the
conditional expectation over ability given log-wealth~$x$:
\begin{equation}\label{eq:cond_avg}
  \langle v(z) \rangle_x
  \equiv \frac{\int v(z)\, f(x,z,t)\, \dd z}{\pi(x,t)} \,.
\end{equation}
The effective drift in the marginal equation is not a constant but
depends on~$x$ through the conditional average $\langle v(z)
\rangle_x$.  If high-ability investors are concentrated at high wealth
levels (a natural consequence of persistent outperformance), then
$\langle v(z) \rangle_x$ is an increasing function of~$x$.  The
marginal Fokker--Planck equation has \emph{state-dependent drift}: the
rich face a higher effective drift than the poor, not because the tax
treats them differently, but because they are disproportionately
high-ability.

\begin{proposition}[Neutrality requires ability--wealth independence]
  \label{prop:neutrality_condition}
The drift-shift $v(z) \mapsto v(z) - \tw$ preserves the marginal
wealth distribution $\pi(x,t)$ if and only if the conditional average
$\langle v(z) \rangle_x$ is independent of~$x$---that is, ability and
wealth are statistically independent in the joint distribution $f$.
\end{proposition}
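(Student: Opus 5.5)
The plan is to first fix what ``preserves the marginal wealth distribution'' means, since the statement is informal. I will read it in the sense of \cref{sec:recap}: the taxed marginal density $\pi$ evolves according to a closed, homogeneous Fokker--Planck equation of the form \eqref{eq:fp_tax}. Concretely, its drift and diffusion coefficients are constants, the tax enters only through $\langle v\rangle - \tw$, and the whole effect of the tax on $\pi$ is a rigid translation $x \mapsto x - \tw t$ that leaves shape and relative positions intact.

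This interpretation matters for one reason. At the level of the \emph{joint} density, the map $f(x,z,t)\mapsto f(x+\tw t,z,t)$ sends solutions of \eqref{eq:fp_joint} to solutions of \eqref{eq:fp_joint_tax} for any $f$. The advection term shifts by the constant $\tw$, and the coefficients $v(z)$, $D(z)$ do not depend on $x$. So a bare translation of $\pi$ holds unconditionally. The content of the proposition must therefore be about when the marginal dynamics \eqref{eq:fp_marginal} reduce to \eqref{eq:fp_tax}.

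For the ``if'' direction, I assume $f(x,z,t)=\pi(x,t)\,g(z,t)$. Then \eqref{eq:cond_avg} gives $\langle v(z)\rangle_x=\int v\,g\,\dd z\equiv\bar v(t)$, and likewise $\langle D(z)\rangle_x=\bar D(t)$, both independent of $x$. Substituting into \eqref{eq:fp_marginal}, with the $z$-flux terms integrating to zero under natural boundary conditions, gives
\[
\partial_t\pi=-(\bar v-\tw)\,\partial_x\pi+\bar D\,\partial_x^2\pi .
\]
This is exactly \eqref{eq:fp_tax}, so the drift-shift symmetry of \cref{sec:recap} applies verbatim. If $g$ is at its stationary Ornstein--Uhlenbeck law, which is the natural case, then $\bar v$ and $\bar D$ are genuine constants.

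For the ``only if'' direction, I argue by contraposition. Suppose $\langle v(z)\rangle_x$ depends on $x$. Then the first term of \eqref{eq:fp_marginal} has a state-dependent velocity, and $\pi$ cannot satisfy a constant-coefficient equation of the form \eqref{eq:fp_tax}. To make this rigorous, I compare the first-moment dynamics. Under \eqref{eq:fp_tax}, every quantile of $\pi$ moves at the same speed. Under \eqref{eq:fp_marginal}, the flux through a level $x$ carries the local velocity $\langle v\rangle_x$. So the upper and lower tails drift at different rates, and the spread, for example $\mathrm{Var}_\pi(x)$, acquires an extra term $2\,\mathrm{Cov}_\pi\bigl(x,\langle v\rangle_x\bigr)$. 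This term is nonzero when $\langle v\rangle_x$ is monotone in $x$, and it is absent in the homogeneous case. Hence the shape is not preserved in the sense defined above.

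The main obstacle I expect is the gap between \emph{mean} independence, meaning $\langle v\rangle_x$ (and $\langle D\rangle_x$) constant in $x$, and full statistical independence of $z$ and $x$. Mean independence is what the argument actually uses, and it is strictly weaker than independence. I would handle this by proving the equivalence with conditional-mean independence of both $v(z)$ and $D(z)$. I would then note that full independence is sufficient, and that it is also necessary along the dynamics when $v$ is strictly monotone in $z$: differentiating in time shows that higher conditional moments of $z$ feed back into $\langle v\rangle_x$ through the $z$-dynamics.
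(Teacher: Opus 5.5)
Your opening observation is correct, and the paper's proof does not address it. No coefficient of \eqref{eq:fp_joint_tax} depends on $x$, so the taxed solution is the untaxed one evaluated at $x+\tw t$. Hence $\pi$ and the profile $\langle v\rangle_x$ are simply translated by $-\tw t$, whatever the ability--wealth correlation. The paper's ``if'' direction is the same reduction to \eqref{eq:fp_hom} as yours, except that it does not track $\langle D\rangle_x$, which you rightly do. Its ``only if'' direction asserts that $x$-dependence of $\langle v\rangle_x$ alters the stationary shape and the Pareto exponent. That cannot hold for \eqref{eq:fp_joint_tax} itself; it would need reset or source terms the model lacks. Recasting the claim as closure of \eqref{eq:fp_marginal} is therefore the right move, but the result is then an instant-wise statement. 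Your closing remark also points the wrong way. At an instant where $f=\pi g$, a direct computation from \eqref{eq:fp_joint_tax} gives $\partial_t\bigl[(\langle v\rangle_x-\bar v)\pi\bigr]=-\mathrm{Var}_g(v)\,\partial_x\pi$, which is nonzero for nonconstant $v$. So full independence is exactly what cannot persist; this is the infinitesimal form of Proposition~\ref{prop:correlation}.

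The genuine gap is in your ``only if'' direction. First, the claim that every quantile moves at the same speed under \eqref{eq:fp_tax} is false, because diffusion spreads the quantiles there too. More seriously, your criterion asks only for constant coefficients and does not pin the diffusion to $\bar D=\E[D(z)]$, and without that pin the direction fails. Take $\sigma(z)=\sigma$ and $v(z)=a+bz$ with $b\neq 0$. Start from a Gaussian law with $\E z=\bar z$, $\mathrm{Var}\,z=\eta^2/(2\gamma)$, $\mathrm{Cov}(x,z)=b\eta^2/(2\gamma^2)$, and $\mathrm{Var}\,x$ large enough. These values are fixed points of the linear moment equations, so $\pi$ stays Gaussian. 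Its mean moves at $\bar v-\tw$ and its variance grows at the constant rate $\sigma^2+b^2\eta^2/\gamma^2$. It therefore solves \eqref{eq:fp_tax} exactly, with drift $\bar v$ and diffusion $D_{\mathrm{eff}}=\sigma^2/2+b^2\eta^2/(2\gamma^2)$. Yet $\langle v\rangle_x=\bar v+b\,\mathrm{Cov}(x,z)\,(x-\E x)/\mathrm{Var}(x)$ is strictly increasing in $x$. Your identity $\tfrac{\dd}{\dd t}\mathrm{Var}_\pi(x)=2\bar D+2\,\mathrm{Cov}_\pi(x,\langle v\rangle_x)$ is right, but the extra term is simply absorbed into $D_{\mathrm{eff}}$. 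It excludes only the diffusion $\bar D$, and only when the covariance is nonzero, which can fail for non-monotone dependence. Instead, pin the diffusion to $\bar D$, let $c$ be the constant drift, and subtract \eqref{eq:fp_tax} from \eqref{eq:fp_marginal}. Integrating once, with decay at infinity, gives
\[
  \bigl(\langle v\rangle_x-c\bigr)\,\pi=\partial_x\Bigl[\bigl(\langle D\rangle_x-\bar D\bigr)\,\pi\Bigr] .
\]
With common volatility the right-hand side vanishes, so $\langle v\rangle_x=c$ on the support of $\pi$. This covers arbitrary $x$-dependence and needs no moment argument. With ability-dependent $\sigma(z)$, this coupled identity is the true criterion. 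Mean-independence of both $v$ and $D$ is then sufficient but not necessary, because the two sides can compensate. So either assume $\sigma(z)=\sigma$, or state the coupled condition rather than claiming an equivalence.
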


\begin{proof}
If $\langle v(z) \rangle_x = \bar{v}$ is independent of~$x$, the
marginal equation reduces to \eqref{eq:fp_hom} with constant drift
$\bar{v}$, and the shift $\bar{v} \mapsto \bar{v} - \tw$ is the
standard drift-shift symmetry.

Conversely, if $\langle v(z) \rangle_x$ depends on~$x$, the marginal
drift is state-dependent and the shift $\langle v(z) \rangle_x
\mapsto \langle v(z) \rangle_x - \tw$ does not preserve the shape of
the stationary distribution: the ratio of drift to diffusion changes
as a function of~$x$, altering the Pareto exponent and the
distributional shape.
\end{proof}

But ability and wealth cannot remain independent when ability is
persistent and affects returns.  High-ability investors accumulate
faster; over time, they migrate to the upper tail.  The joint
distribution $f(x,z,t)$ develops a positive correlation between $x$
and~$z$ even if the initial distribution is factored.

\begin{proposition}[Endogenous correlation]\label{prop:correlation}
Suppose $v(z)$ is strictly increasing and the initial condition is
factored: $f(x,z,0) = \pi_0(x)\, g_0(z)$.  Then for any $t > 0$, the
conditional mean $\langle z \rangle_x$ is a strictly increasing
function of~$x$.  The wealth and ability variables are positively
correlated:
\begin{equation}
  \mathrm{Cov}(x, z) > 0 \qquad \text{for all } t > 0 \,.
\end{equation}
\end{proposition}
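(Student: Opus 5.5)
The plan is to treat the two claims separately. The covariance statement can be handled by a moment ODE. The monotonicity of $\langle z\rangle_x$ is a genuinely stronger, pointwise claim, which I would attack through a conditioning argument on the ability path.

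\textbf{Step 1 (covariance).} Work with the particle dynamics \eqref{eq:langevin_z} and \eqref{eq:z_dynamics} rather than with \eqref{eq:fp_joint} directly; this is equivalent to multiplying \eqref{eq:fp_joint} by $xz$, $x$, $z$ and integrating by parts with natural boundary conditions. Since $B^i$ and $Z^i$ are independent, $\dd\langle x,z\rangle_t=0$. It\^o's product rule then gives
\begin{equation*}
  \frac{\dd}{\dd t}\mathrm{Cov}(x,z) = \mathrm{Cov}\bigl(v(z),z\bigr) - \gamma\,\mathrm{Cov}(x,z),
\end{equation*}
and the factored initial condition gives $\mathrm{Cov}(x,z)(0)=0$. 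Integrating,
\begin{equation*}
  \mathrm{Cov}(x,z)(t)=\int_0^t e^{-\gamma(t-s)}\,\mathrm{Cov}\bigl(v(z_s),z_s\bigr)\,\dd s .
\end{equation*}
For strictly increasing $v$ and a non-degenerate random variable $z_s$, Chebyshev's association inequality gives $\mathrm{Cov}(v(z_s),z_s)>0$. This follows from writing $2\,\mathrm{Cov}(v(z),z)=\E[(v(z)-v(z'))(z-z')]$ with $z'$ an independent copy. Non-degeneracy holds for every $s>0$ because $\eta>0$: the OU transition from $g_0$ has variance at least $\eta^2(1-e^{-2\gamma s})/(2\gamma)>0$. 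Hence the integrand is positive on $(0,t]$, and $\mathrm{Cov}(x,z)>0$ for all $t>0$.

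\textbf{Step 2 (monotone conditional mean).} I would take the common-volatility case $\sigma(z)=\sigma$, which the paper already flags as capturing the mechanism. Conditional on the ability path, $x_t = Y$ with
\begin{equation*}
  Y = A + N, \qquad A=\int_0^t v(z_s)\,\dd s, \qquad N = x_0+\sigma B_t .
\end{equation*}
Here $N$ is independent of $(A,z_t)$. Consequently $\langle z\rangle_x=\E[z_t\mid Y=x]=\E[m(A)\mid Y=x]$, where $m(a)\equiv\E[z_t\mid A=a]$.

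Two monotonicity facts then finish the argument. First, if the density of $N$ is log-concave, Efron's theorem implies that $\E[h(A)\mid A+N=y]$ is increasing in $y$ for every increasing $h$, and strictly so when $h$ is strictly increasing and $A$ is non-degenerate. The density of $N$ is $\pi_0$ convolved with a Gaussian, so it is log-concave whenever $\pi_0$ is, for example a Gaussian initial cross-section. Second, $m$ must be increasing. This holds because $z_t$ and the functional $A$ are both increasing functions of the Gaussian-Markov ability path, so they are positively dependent. For linear $v$ the pair $(A,z_t)$ is jointly Gaussian with positive covariance, so $m$ is affine with positive slope. Composing the two facts gives a strictly increasing $\langle z\rangle_x$.

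\textbf{Main obstacle.} The pointwise monotonicity in Step 2 does not hold for arbitrary $\pi_0$, $g_0$, $v$ and $\sigma(z)$. With a multimodal $\pi_0$, or with $\sigma(z)$ varying in $z$, the likelihood-ratio ordering can fail locally, and $\langle z\rangle_x$ can then dip on bounded intervals even though $\mathrm{Cov}(x,z)>0$. I would therefore prove the monotonicity claim under explicit regularity assumptions: $\sigma$ constant, $\pi_0$ log-concave, and $v$ increasing with $m$ monotone (automatic for linear $v$). I would state the covariance claim of Step 1 unconditionally, since it needs only strict monotonicity of $v$, $\eta>0$ and the factored initial condition. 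The delicate part is establishing that $m$ is increasing for general nonlinear $v$. My first attempt would use FKG/association for increasing functionals of the OU path, obtained as a limit of discretised Gaussian vectors with nonnegative correlations. I expect this to yield only weak monotonicity, with strictness then recovered from the non-degeneracy of $A$.
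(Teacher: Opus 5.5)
\paragraph{Step 1 and your diagnosis are correct.}
Your Step 1 is correct, and it takes a genuinely different and more rigorous route than the paper. The paper argues only heuristically: it describes an infinitesimal tilt over $[0,\dd t]$ and asserts that mean reversion ``cannot eliminate'' it. Your identity
\[
  \mathrm{Cov}(x,z)(t)=\int_0^t e^{-\gamma(t-s)}\,\mathrm{Cov}\bigl(v(z_s),z_s\bigr)\,\dd s ,
\]
together with Chebyshev's inequality and the OU variance bound, proves the covariance claim for every $t>0$ and every $\gamma>0$, under mild moment conditions. It does so without passing through the pointwise claim. You are also right that the pointwise monotonicity is false in the generality stated, which the paper's sketch passes over. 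To first order in $t$ (constant $\sigma$, smooth $\pi_0$), $\langle z\rangle_x \approx \E_{g_0} z - t\,\mathrm{Cov}_{g_0}\bigl(z,v(z)\bigr)\,(\ln\pi_0)'(x)$. So log-concavity of $\pi_0$ is essentially necessary.

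\paragraph{The gap: monotonicity of $m$.}
The gap is the second fact in Step 2. Monotonicity of $m(a)=\E[z_t\mid A=a]$ is \emph{not} automatic for linear $v$, because joint Gaussianity of $(A,z_t)$ requires $g_0$ to be Gaussian or degenerate. Nor can it come from FKG/association, even in weak form. Association gives only $\mathrm{Cov}(\phi(A),\psi(z_t))\ge 0$ for increasing $\phi,\psi$; monotone regression is strictly stronger.

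Here is a counterexample that satisfies all your stated hypotheses. Take $v(z)=z$, $\eta=t=1$, $\gamma\to 0$ (continuity covers small $\gamma>0$), and $g_0=\tfrac{1}{2}(\delta_{-L}+\delta_{L})$. Then $A=z_0+\int_0^1 W_s\,\dd s$, where the integral is Gaussian with variance $1/3$ and has covariance $1/2$ with $W_1$. Hence $\E[z_1\mid z_0=\pm L,\,A=a]=\tfrac{3}{2}a\mp\tfrac{L}{2}$, the posterior odds for $z_0=+L$ are $e^{6aL}$, and
\[
  m(a)=\tfrac{3}{2}\,a-\tfrac{L}{2}\tanh(3aL),\qquad m'(0)=\tfrac{3}{2}\,(1-L^2)<0\quad\text{for } L>1 .
\]
Intuitively, an investor who started at $+L$ but accumulated only a moderate $A$ must have fallen far. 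Because $z_0$ and $W$ are independent, $(A,z_1)$ is associated, yet $m$ dips. With $\pi_0$ a narrow Gaussian and $\sigma$ small, $\langle z\rangle_x=\E[m(A)\mid A+N=x]\to m(x)$. So $\langle z\rangle_x$ is non-monotone even with constant $\sigma$, linear $v$ and log-concave $\pi_0$, and your list of failure modes must include multimodal $g_0$.

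\paragraph{Repairing the restricted result.}
Your restricted result needs $g_0$ Gaussian or degenerate. Then, for $v(z)=c+bz$ with $b>0$, $\mathrm{Cov}(A,z_t)=b\int_0^t e^{-\gamma(t-s)}\mathrm{Var}(z_s)\,\dd s>0$, and $m$ is affine and increasing. For nonlinear $v$ you need an explicit positive-regression hypothesis, such as TP$_2$, on the law of $(A,z_t)$.

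A smaller point concerns the first fact in Step 2. Efron's theorem assumes both summands are log-concave, and $A$ is bimodal here, so it does not apply as cited. Argue directly instead: $p_N(y'-a)/p_N(y-a)$ is increasing in $a$ for $y'>y$, which needs only log-concavity of the density of $N$ and gives the MLR ordering you use.
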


\begin{proof}[Proof sketch]
At $t = 0$, all wealth levels share the same ability distribution.
Over the interval $[0, \dd t]$, investors with $z > \bar{z}$ receive
drift $v(z) > \bar{v}$ and shift rightward in log-wealth relative to
the mean; investors with $z < \bar{z}$ shift leftward.  This
differential displacement creates a positive tilt in the conditional
distribution of~$z$ given~$x$: the right tail of log-wealth becomes
enriched in high-$z$ investors, and the left tail in low-$z$
investors.  The mean-reversion in $z$ partially offsets this tilt but
cannot eliminate it as long as $v(z)$ is nonconstant and ability
persistence is nonzero ($\gamma < \infty$).
\end{proof}

\section{Consequences for neutrality, prices, and
  portfolios}\label{sec:neutrality}

\subsection{Neutrality breaking: the mechanism}

Propositions~\ref{prop:neutrality_condition}
and~\ref{prop:correlation} together establish that the drift-shift
symmetry breaks whenever returns are heterogeneous and persistent.
The uniform reduction $v(z) \mapsto v(z) - \tw$ imposes a larger
\emph{relative} burden on low-ability investors, whose gross drift is
smaller: their net drift $v(z) - \tw$ may turn negative while
high-ability investors remain in surplus.  Wealth migrates from
low-$z$ to high-$z$ types, reshaping the stationary distribution.
This is \citeauthor{GuvenEtAl2023}'s ``use-it-or-lose-it'' mechanism
in Fokker--Planck language: the tax does not alter any individual's
stochastic environment (it is non-distortionary in the sense of P1)
but it is not neutral in the distributional sense of P3.

The redistribution paradox of \citet{Froeseth2026R}---that a
proportional wealth tax is simultaneously non-distortionary and
non-redistributive---therefore holds only in the homogeneous limit.
Under heterogeneous returns, the same tax actively redistributes
wealth through the market channel by imposing a uniform cost on
investors with heterogeneous capacities to bear it.

\subsection{The effective Pareto exponent}

In the homogeneous framework, the stationary Pareto exponent of the
wealth distribution (when source and sink terms are included) depends
on the drift-to-diffusion ratio $v/D$.  A uniform drift shift
preserves this ratio's structure and hence the Pareto exponent
\citep{Froeseth2026R}.

With heterogeneous ability, the effective Pareto exponent of the
upper tail is governed by the high-$z$ investors who dominate the
right tail:
\begin{equation}\label{eq:pareto_het}
  \alpha_{\mathrm{eff}} \approx 1
    + \frac{v(z_{\max}) - \tw}{D(z_{\max})} \,,
\end{equation}
where $z_{\max}$ is the ability level of the investors who populate
the tail.  The tax $\tw$ reduces the effective Pareto exponent,
thinning the right tail---a redistributive effect absent in the
homogeneous framework.

\begin{remark}[Dependence on the full ability distribution]
Equation~\eqref{eq:pareto_het} singles out the tail-dominating ability
level $z_{\max}$.  In the mean-field model of
\citet{BernardBouchaudLeDoussal2026}, where agents have quenched
heterogeneous growth rates drawn from a distribution with variance
$\Sigma_0^2$, the stationary wealth tail exponent in the partially
localised phase takes the form $\mu = 1 - \Sigma_0^2/\sigma^4$.
The tail exponent thus depends on the \emph{variance} of the growth-rate
distribution, not only on its maximum.  This suggests that
\eqref{eq:pareto_het} captures the leading-order effect but that the
full ability distribution---in particular the dispersion of
returns---enters at the next order, further thinning or thickening the
tail beyond what the single dominant type predicts.
\end{remark}

\subsection{Asset prices}\label{sec:prices}

In the homogeneous framework, asset prices are invariant under the
wealth tax because all investors are marginal for all assets, and the
drift shift affects them identically.  The stochastic discount factor
is common.

With heterogeneous ability, the set of marginal investors for each
asset may differ.  If the wealth tax causes low-ability investors to
shed risky assets (because they cannot sustain the tax from their lower
returns), the identity of the marginal buyer changes.  The remaining
marginal investors---disproportionately high-ability---have different
risk preferences and opportunity sets, shifting equilibrium risk premia.

\citet{GuvenEtAl2023} find this quantitatively in their calibrated
model: the equilibrium interest rate falls and the return to
entrepreneurial capital rises after a tax reform from capital income
taxation to wealth taxation.  In the Fokker--Planck language, this
corresponds to the aggregate drift $\langle v(z) \rangle_x$ adjusting
endogenously as the composition of investors at each wealth level
changes.

\subsection{Portfolio allocations}

The portfolio weight invariance of \citet{Froeseth2026N} depends on
multiplicative separability: the tax scales all asset payoffs by the
same factor, so the tangent portfolio is unchanged.

When ability interacts with returns, the effective opportunity set
differs across investors.  A high-$z$ investor's after-tax excess
return on risky capital is $\mu(z_{\mathrm{high}}) - \tw - r_f$; a
low-$z$ investor's is $\mu(z_{\mathrm{low}}) - \tw - r_f$.  The ratio
of these excess returns changes with~$\tw$:
\begin{equation}\label{eq:ratio}
  \frac{\mu(z_{\mathrm{high}}) - \tw - r_f}
       {\mu(z_{\mathrm{low}}) - \tw - r_f}
\end{equation}
is not invariant under shifts in~$\tw$ (it is undefined when the
denominator passes through zero).  Investors optimally reallocate:
low-ability investors exit risky assets; high-ability investors absorb
the freed capacity.  Portfolio invariance breaks not because the tax is
non-proportional, but because investors face different effective
opportunity sets when ability interacts with the tax.

\section{Flow Taxes Versus Stock Taxes Under Heterogeneous
  Returns}\label{sec:flow_stock}

The preceding analysis focused on the wealth tax (a stock tax) in
isolation.  In practice, investors face both stock and flow taxes
simultaneously.  \citet{Froeseth2026F} shows that under three
structural conditions---equal capital income and corporate tax rates
(C1), shielding at the risk-free rate (C2), and uniform wealth tax
assessment (C3)---the combined system acts through a
\emph{drift-shift-and-rescale} transformation that preserves portfolio
neutrality.  The flow taxes rescale excess drifts by
$(1 - \tc)(1 - \td)$; the wealth tax shifts all drifts by $-\tw$.
Neither modification alters relative drifts between assets.

Ability heterogeneity breaks this symmetry in structurally different
ways for the two tax types.

\subsection{A motivating example}\label{sec:example}

The following example, adapted from \citet{GuvenEtAl2023}, illustrates
the asymmetry.

Consider two investors, each with wealth $a = 1000$.  Investor~1 earns
return $r_1 = 0\%$ (low ability); investor~2 earns $r_2 = 20\%$ (high
ability).  Compare a capital income tax at rate $\tk = 25\%$ with a
wealth tax at rate $\tw = 2.5\%$, assessed on end-of-period wealth
(consistent with equation~\eqref{eq:wealth_tax}).

\begin{table}[ht]
\centering
\caption{After-tax wealth under capital income tax versus wealth tax,
  for two investors with identical wealth but different returns.}
\label{tab:example}
\small
\begin{tabular}{@{}lccc@{}}
\toprule
& \textbf{Pre-tax} & \textbf{Cap.\ income tax}
  & \textbf{Wealth tax} \\
& & $\tk = 25\%$ & $\tw = 2.5\%$ \\
\midrule
\textit{Investor 1} ($r_1 = 0\%$) &&& \\[2pt]
\quad End-of-period wealth & 1000 & 1000 & 1000 \\
\quad Tax paid & --- & 0 & 25 \\
\quad After-tax wealth & 1000 & 1000 & 975 \\[6pt]
\textit{Investor 2} ($r_2 = 20\%$) &&& \\[2pt]
\quad End-of-period wealth & 1200 & 1200 & 1200 \\
\quad Tax paid & --- & 50 & 30 \\
\quad After-tax wealth & 1200 & 1150 & 1170 \\[6pt]
\midrule
Total tax revenue & --- & 50 & 55 \\
Wealth gap (after tax) & 200 & 150 & 195 \\
\bottomrule
\end{tabular}
\end{table}

Under the capital income tax, investor~1 pays nothing (no income to
tax) and investor~2 bears the entire burden.  The wealth gap narrows
from 200 to 150: the flow tax compresses the distribution by taxing
the productive investor more heavily.

Under the wealth tax, assessed on end-of-period wealth, investor~1
pays $0.025 \times 1000 = 25$ and investor~2 pays
$0.025 \times 1200 = 30$.  The tax is not identical---investor~2's
higher return enlarges the tax base---but the compression is marginal:
the wealth gap falls from 200 to 195, compared with 150 under the
income tax.  The stock tax nearly preserves relative positions while
the flow tax compresses them sharply.  And the \emph{composition}
changes: investor~1 loses wealth in absolute terms (from 1000 to 975),
while investor~2 gains (from 1000 to 1170, net of tax).  Over multiple
periods, this asymmetry compounds: capital migrates from low-return to
high-return investors.  (Under homogeneous returns, $r_1 = r_2$, the
two systems are equivalent---cf.\ \Cref{sec:intro}.)

\subsection{The Fokker--Planck formulation}\label{sec:fp_flow_stock}

In the continuous-time framework, consider the
combined tax system of \citet{Froeseth2026F}: corporate tax at
rate~$\tc$, capital income tax at rate~$\tk$, dividend/gains tax at
rate~$\td$, and wealth tax at rate~$\tw$.  For an investor of
ability~$z$ holding a portfolio of risky assets in corporate form and a
risk-free asset held personally, the after-tax dynamics in log-wealth
are:
\begin{equation}\label{eq:combined_z}
  \dd x_i = \Bigl[
    \underbrace{(1 - \tc)(1 - \td)\bigl(\mu(z_i) - \rf\bigr)}_{
      \text{rescaled excess drift (flow taxes)}}
    + \underbrace{\rf(1 - \tk)}_{
      \text{after-tax risk-free}}
    - \underbrace{\tw\vphantom{\rf(1)}}_{
      \text{drift shift (stock tax)}}
    - \tfrac{\sigma(z_i)^2}{2}
  \Bigr] \dd t
  + \sigma(z_i) \, \dd B_t^i \,.
\end{equation}

The flow taxes enter through the factor $(1 - \tc)(1 - \td)$, which
multiplies the \emph{excess return} $\mu(z_i) - \rf$.  The wealth tax
enters as $-\tw$, a uniform shift independent of returns.

Under heterogeneous returns, these two modifications have structurally
different incidence:

\textbf{Flow tax incidence.}  The flow-tax burden on investor~$i$ is
proportional to the excess return $\mu(z_i) - \rf$.  A high-ability
investor earns a high excess return and pays correspondingly more flow
tax.  A low-ability investor earns a low (or zero) excess return and
pays correspondingly less.  The flow tax is \emph{ability-weighted}:
its burden scales with~$z$.

\textbf{Stock tax incidence.}  The wealth-tax burden on investor~$i$
is $\tw$ per unit of log-wealth, independent of~$z$.  Both high- and
low-ability investors pay the same tax per unit of wealth.  But the
\emph{real} burden---the tax relative to the investor's capacity to
regenerate wealth---is inversely related to ability.  The stock tax
is \emph{ability-blind} in its rate but \emph{ability-dependent} in
its real incidence.

\subsection{Drift decomposition under heterogeneous
  returns}\label{sec:drift_decomp}

Write the after-tax drift of investor~$i$ as:
\begin{equation}\label{eq:drift_decomp}
  v^{\mathrm{tax}}(z_i)
  = \underbrace{(1 - \tc)(1 - \td)}_{\equiv\, \lambda}\,
    \bigl(\mu(z_i) - \rf\bigr)
    + \rf(1 - \tk) - \tw - \tfrac{\sigma(z_i)^2}{2} \,.
\end{equation}
Consider two investors with abilities $z_H > z_L$.  Their drift
difference is:
\begin{equation}\label{eq:drift_diff}
  v^{\mathrm{tax}}(z_H) - v^{\mathrm{tax}}(z_L)
  = \lambda\bigl[\mu(z_H) - \mu(z_L)\bigr]
    - \tfrac{1}{2}\bigl[\sigma(z_H)^2 - \sigma(z_L)^2\bigr] \,.
\end{equation}
The wealth tax $\tw$ does not appear: it cancels in the drift
\emph{difference}.  The flow taxes appear through $\lambda < 1$: they
\emph{compress} the drift difference between the two investors.

This reveals the fundamental asymmetry:

\begin{proposition}[Differential incidence of flow and stock taxes]
\label{prop:differential}
Under heterogeneous returns with $\mu(z_H) > \mu(z_L)$ and common
volatility $\sigma(z) = \sigma$:
\begin{enumerate}
  \item The flow taxes reduce the drift difference between high- and
    low-ability investors by the factor $\lambda = (1 - \tc)(1 - \td)$.
    They compress the ability-driven wealth divergence.
  \item The wealth tax does not affect the drift difference.  It shifts
    all drifts uniformly, preserving the ability-driven divergence rate.
\end{enumerate}
Consequently, for a given total tax burden, a revenue-neutral shift
from flow taxes toward wealth taxation \emph{widens} the drift gap
between high- and low-ability investors, accelerating the reallocation
of wealth from low-$z$ to high-$z$ types.
\end{proposition}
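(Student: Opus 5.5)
The proof is a direct computation from the drift decomposition \eqref{eq:drift_decomp}, followed by a comparative-statics argument for the revenue-neutral claim. I would first specialise \eqref{eq:drift_diff} to common volatility $\sigma(z)=\sigma$. The variance term $\tfrac12[\sigma(z_H)^2-\sigma(z_L)^2]$ then vanishes, and the drift gap is
\begin{equation*}
  \Delta v^{\mathrm{tax}} \equiv v^{\mathrm{tax}}(z_H)-v^{\mathrm{tax}}(z_L)=\lambda\,\Delta\mu,
  \qquad \Delta\mu\equiv\mu(z_H)-\mu(z_L)>0 .
\end{equation*}
The untaxed gap is obtained by setting $\tc=\td=\tk=\tw=0$ in the same expression, which gives $\Delta v=\Delta\mu$. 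Part~1 follows at once: the flow taxes multiply the gap by $\lambda=(1-\tc)(1-\td)\in(0,1]$, with strict compression whenever $\tc$ or $\td$ is positive. For Part~2 I would note that $\tw$ enters \eqref{eq:drift_decomp} additively and independently of $z$, so $\partial\Delta v^{\mathrm{tax}}/\partial\tw=0$. The same is true of the risk-free term $\rf(1-\tk)$, and I would remark on this explicitly.

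To connect ``drift gap'' with ``divergence rate'', I would use \eqref{eq:langevin_z_tax} at fixed abilities, i.e.\ in the limit of high persistence $\gamma\to0$. There the difference $x_H-x_L$ is a Brownian motion with drift $\Delta v^{\mathrm{tax}}$ and variance $2\sigma^2$. Its expected growth rate is therefore exactly the drift gap, and by Proposition~\ref{prop:correlation} this gap is what drives the wealth--ability correlation.

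For the revenue-neutral consequence, I would parametrise a policy path $(\tc(s),\td(s),\tw(s))$ along which revenue is held fixed. The parameter $s$ moves weight from the flow taxes to the wealth tax, so $\lambda(s)$ is increasing in $s$ and $\tw(s)$ increases to compensate. Since $\Delta v^{\mathrm{tax}}(s)=\lambda(s)\Delta\mu$ does not depend on $\tw$, we get $\frac{\dd}{\dd s}\Delta v^{\mathrm{tax}}=\lambda'(s)\Delta\mu>0$. The gap widens however $\tw$ adjusts, which is precisely why the revenue constraint never enters the sign of the result.

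The main obstacle is making ``for a given total tax burden'' precise. Revenue depends on the joint density $f$ through terms like $\int[(1-\lambda)(\mu(z)-\rf)+\tk\rf+\tw]e^x f\,\dd x\,\dd z$, and $f$ itself responds to the policy. I would avoid solving for the equilibrium by observing that the conclusion needs only two facts: the direction of the shift, meaning $\lambda$ rises, and the $\tw$-independence of the gap. Revenue neutrality then serves only to guarantee that such a path exists, which I would assert by continuity for small shifts, or simply state as an assumption.
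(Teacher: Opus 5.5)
Your proposal is correct and follows the paper's proof. Both read Parts~1 and~2 directly off \eqref{eq:drift_diff}, where the volatility term vanishes under common $\sigma$, and both get the revenue-neutral consequence from $\lambda[\mu(z_H)-\mu(z_L)]$ rising as $\lambda$ rises while $\tw$ never enters the gap; your explicit policy-path derivative and the $x_H - x_L$ reading of the ``divergence rate'' are useful elaborations but do not change the argument.
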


\begin{proof}
Part~(1) follows directly from \eqref{eq:drift_diff}: the flow-tax
factor $\lambda$ multiplies the return difference.  Part~(2) follows
from the cancellation of $\tw$ in \eqref{eq:drift_diff}.
Revenue-neutrality implies that lowering $\tc, \td$ (raising
$\lambda$ toward~1) requires raising $\tw$.  The drift difference
$\lambda[\mu(z_H) - \mu(z_L)]$ increases while the uniform shift
$\tw$ increases but does not enter the difference.
\end{proof}

\begin{remark}[Guvenen's three channels, reinterpreted]
\citet{GuvenEtAl2023} decompose their welfare result into three
channels: the ``use-it-or-lose-it'' reallocation, the general
equilibrium price response, and the behavioral savings response.
Proposition~\ref{prop:differential} provides the Fokker--Planck
counterpart of the first channel: the reallocation arises because the
stock tax preserves drift differences while the flow tax compresses
them.  A revenue-neutral switch from flow to stock taxation therefore
mechanically widens the drift gap, driving the redistribution of
wealth toward high-ability investors that constitutes the
use-it-or-lose-it effect.
\end{remark}

\begin{figure}[tp]
  \centering
  \includegraphics[width=0.85\textwidth]{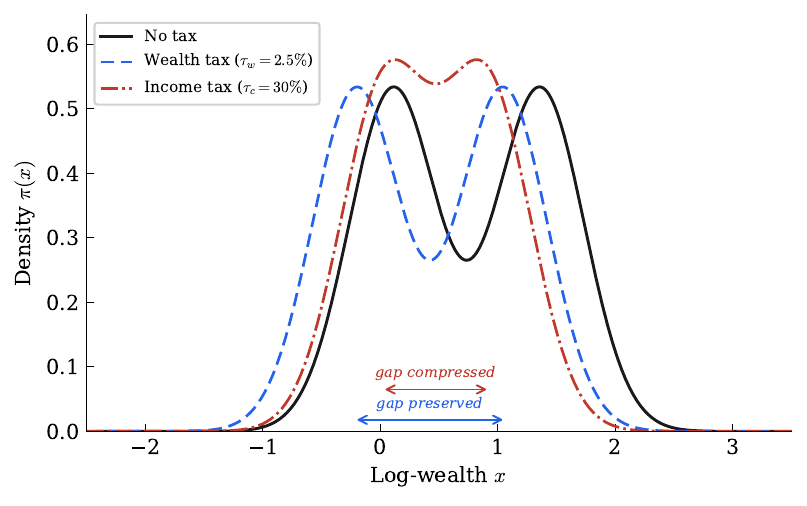}
  \caption{Stationary wealth distribution for a two-type economy
    (high-ability $\mu_H = 12\%$, low-ability $\mu_L = 2\%$) under
    three tax regimes.  The wealth tax shifts both peaks leftward by
    the same amount, preserving the inter-type gap
    (Proposition~\ref{prop:differential}).  The income tax compresses
    the gap by the factor $\lambda = 1 - \tc$.  Parameters:
    $\sigma = 0.15$, $\delta = 0.08$, $\tw = 2.5\%$, $\tc = 30\%$,
    equal population shares.}
  \label{fig:wealth_dist}
\end{figure}

\subsection{Implications for the generalised neutrality
  theorem}\label{sec:gen_neutrality}

The generalised neutrality theorem of \citet{Froeseth2026F} states
that under (C1)--(C3), the combined flow-and-stock tax system
preserves portfolio weights.  The theorem is derived under homogeneous
returns.  How does it extend to heterogeneous returns?

The portfolio optimality condition for investor~$i$ depends on the
after-tax excess returns \emph{available to that investor}.  Under
homogeneous returns, all investors face the same excess returns
(scaled by $\lambda$), and the tangent portfolio is common.  Under
heterogeneous returns, investor~$i$'s after-tax excess return on risky
asset~$k$ relative to the risk-free asset is:
\begin{equation}\label{eq:excess_het}
  R_k^{\mathrm{ex}}(z_i)
  = \lambda\bigl[\mu_k(z_i) - \rf\bigr] - \tw(\alpha_k - \alpha_0) \,,
  \qquad
  \mu_k(z_i) \;\equiv\; \E^{\mathbb{P}}\!\bigl[r_k \,\big|\, z = z_i\bigr] \,.
\end{equation}
As in~\eqref{eq:gbm_z}, the conditional expectation is under the
physical measure~$\mathbb{P}$: heterogeneity enters through
conditioning on the state variable~$z$, not through the probability
measure (cf.\ Remark~\ref{rem:beliefs}).

Under condition~(C3), the wealth tax term cancels across risky assets.
But the return $\mu_k(z_i)$ itself depends on investor ability.  If
high-ability investors earn higher returns on the \emph{same}
assets---through better deal terms, lower transaction costs, or
superior selection---then different investors face different effective
opportunity sets even when holding the same assets.

In the limiting case where ability affects the level of returns
uniformly ($\mu_k(z) = z \cdot \mu_k$ for all~$k$), the relative
excess returns between risky assets are independent of~$z$:
\begin{equation}
  \frac{R_k^{\mathrm{ex}}(z)}{R_j^{\mathrm{ex}}(z)}
  = \frac{\mu_k - \rf/z}{\mu_j - \rf/z} \,,
\end{equation}
which does depend on~$z$ through the effective risk-free rate
$\rf/z$.  Portfolio weights differ across ability types even under
(C1)--(C3).  The generalised neutrality theorem holds
\emph{within} each ability type (tax rates do not affect that type's
portfolio) but the aggregate portfolio---the wealth-weighted average
across types---shifts as the wealth distribution across types changes.

\section{Discussion}\label{sec:discussion}

\subsection{Relationship to the existing framework}

The results above do not invalidate the neutrality framework of
P1--P3.  Rather, they delineate its domain of validity.  The
drift-shift symmetry is a mathematical identity: a proportional wealth
tax always enters as a uniform drift reduction.  The economic content
of neutrality---that the wealth distribution and relative prices are
preserved---holds when this uniform shift translates into a uniform
real burden across investors, which requires that ability and wealth be
statistically independent
(Proposition~\ref{prop:neutrality_condition}).  When ability is
persistent, Proposition~\ref{prop:correlation} shows this
independence cannot hold, and the symmetry breaks economically.

\subsection{What the Fokker--Planck formulation adds}

The joint Fokker--Planck equation \eqref{eq:fp_joint} nests both the
homogeneous and heterogeneous cases.  The deviation from neutrality is
controlled by a single structural feature: $\partial \langle v(z)
\rangle_x / \partial x$, the conditional dependence of drift on
wealth.  This suggests a natural perturbation theory: when ability
heterogeneity is small (low~$\eta$, fast mean-reversion~$\gamma$),
deviations from neutrality are second-order; when heterogeneity is
large, the full joint dynamics must be tracked.

\subsection{Empirical implications}

The framework generates a testable prediction: the degree of
neutrality breaking should correlate with the share of wealth held in
asset classes where return persistence is empirically documented.
Within any economy, the listed-equity and government-bond segments---where
informational efficiency constrains return dispersion---should
exhibit approximate neutrality.  The private-business, venture-capital,
and entrepreneurial segments---where persistent ability differentials
are well documented---should exhibit stronger distributional effects
from wealth taxation.  The aggregate deviation from neutrality depends
on the wealth-weighted mix of these segments.

\paragraph{Norwegian evidence (\citealp{FagerengEtAl2020}).}  \citeauthor{FagerengEtAl2020} use
Norwegian administrative panel data spanning 1993--2013---covering all
taxpayers, with third-party-reported holdings across listed equities,
bonds, deposits, private businesses, and real estate---to establish
three facts directly relevant to the present framework.

First, individual-level return heterogeneity is large: the standard
deviation of real returns on net worth is 8.6 percentage points, and
the unweighted 90th--10th percentile spread exceeds 15 percentage points.
Second, the heterogeneity is persistent: individual fixed effects
account for approximately 26\% of the variance in returns (the
$R^2$ of the return regression rises from 0.30 without individual
fixed effects to 0.50 with them), and the distribution of
these fixed effects has a 90--10 percentile range of nearly
8 percentage points.  Even deposit accounts---the most homogeneous asset
class---exhibit a fixed-effect standard deviation of 2.6 percentage
points, indicating that return heterogeneity extends beyond portfolio
choice into deposit pricing and financial literacy.

Third, returns are positively correlated
with wealth: moving from the 10th to the 90th percentile of net worth
increases the average before-tax return by 11 percentage points,
conditional on individual fixed effects and time effects (the
unconditional gradient is approximately 18 percentage points).

The composition of Norwegian wealth connects these facts to the
present framework.  \citet{FagerengEtAl2020} document that at the top
of the wealth distribution, unlisted business
equity\footnote{Fagereng et al.\ use the term ``private equity'' to
  denote direct ownership stakes in unlisted firms---closely held
  businesses, family firms, and entrepreneurial ventures---as recorded
  in the Norwegian shareholder registry.  This is distinct from the
  financial-industry usage of ``private equity'' to denote buyout and
  venture capital funds.} dominates: it constitutes approximately
38\% of gross wealth for the top 1\% and 85\% for the top
0.01\%.%
\footnote{Computed from Table~1A of \citet{FagerengEtAl2020}, which
  reports gross-wealth composition by fractile.  The 85\% figure is
  read directly from the top-0.01\% row; the 38\% is the approximate
  population-weighted average across the three fractiles comprising
  the top~1\%.}  Return heterogeneity exists across all asset
classes---housing, listed equities, bonds---but is most extreme in
unlisted businesses, where the standard deviation of returns reaches 52
percentage points and business owners exhibit a far wider distribution
of individual fixed effects than non-owners.  The aggregate return
persistence is therefore disproportionately driven by the asset class
that dominates top-percentile wealth.

This interacts with the framework in two ways.  First, the large
private-business share at the top implies a large effective variance
of $\mu(z)$ across wealthy investors, amplifying the neutrality
breaking identified in Propositions~\ref{prop:neutrality_condition}
and~\ref{prop:correlation}.  Second, the wealth composition interacts
with condition~(C3) of \citet{Froeseth2026F}---the requirement that
the wealth tax assessment base is uniform across assets.  The
Norwegian \emph{verdsettelsesrabatt} (valuation discount) assesses
listed equities and unlisted shares at different fractions of market
value.  This breaks (C3) precisely where the return heterogeneity is
concentrated: in the private-business sector.  A third channel reinforces the other two:
\citet{AlstadsaeterEtAl2019} show that tax evasion through offshore
holdings is heavily concentrated at the very top of the wealth
distribution, further eroding the effective uniformity of the
assessment base.  The three symmetry-breaking channels---heterogeneous
returns, non-uniform assessment, and differential evasion---compound
rather than offset.

\paragraph{US evidence (\citealp{SmithZidarZwick2023}).}  \citeauthor{SmithZidarZwick2023} use
administrative tax data to estimate top wealth in the United States
under heterogeneous returns.  They find substantial return
heterogeneity within asset classes---the interest rate on fixed income
at the top is approximately 3.5 times higher than the average---and
document that the top~1\%, 0.1\%, and 0.01\% wealth shares reached
33.7\%, 15.7\%, and 7.1\% respectively by 2016.
\citet{HubmerKrusellSmith2021} decompose the drivers of rising US
wealth inequality and find that portfolio heterogeneity and the
positive correlation between returns and wealth are essential to match
the data---a quantitative confirmation of the endogenous correlation
identified in Proposition~\ref{prop:correlation}.

\paragraph{Cross-country variation.}  The prediction is not that some
countries exhibit neutrality and others do not, but that the degree of
deviation varies with the composition of wealth at the top.  The
Norwegian data show that even in a Nordic economy with well-developed
public equity markets, private-business wealth dominates at the very
top---and it is this segment that drives the aggregate return
persistence.  \citet{GuvenEtAl2023} note that despite much lower
income inequality, Norwegian wealth inequality is comparable to the
United States, consistent with the role of heterogeneous returns in
shaping the wealth distribution.  Countries where top-percentile
wealth is more concentrated in listed equities and diversified
portfolios should lie closer to the homogeneous benchmark.  The
aggregate deviation from neutrality depends on this composition and
varies across economies and over time.

\subsection{Sources of return persistence and the interpretation of
  ability}\label{sec:winner}

The efficiency argument for wealth taxation in \citet{GuvenEtAl2023}
rests on a specific interpretation of the ability parameter~$z$: it
reflects \emph{productive skill} that generates genuine economic
value.  A high-$z$ investor earns high returns because she selects
better projects, manages firms more effectively, or deploys capital
to higher-value uses.  The use-it-or-lose-it mechanism is then
welfare-improving: the wealth tax accelerates the reallocation of
capital from less productive to more productive hands.

But the empirical finding of persistent return heterogeneity---the
individual fixed effects of \citet{FagerengEtAl2020}---is consistent
with several structural mechanisms that have little to do with
individual skill.  The welfare implications of wealth taxation depend
critically on which mechanism dominates.  We identify three distinct
sources of return persistence, each with different implications for
the framework.

\begin{remark}[Relationship to heterogeneous beliefs]\label{rem:beliefs}
A large literature models return heterogeneity through heterogeneous
beliefs: investors hold different subjective probability measures over
a \emph{common} return-generating process
\citep{Lintner1969,JouiniNapp2006,Basak2005}.  The ability
parameter~$z$ in the present framework is formally more general: each
investor faces a \emph{genuinely different} drift~$\mu(z_i)$, not a
different perception of a common drift.  At the level of the
Fokker--Planck equation, the distinction is invisible---Propositions~\ref{prop:neutrality_condition}--\ref{prop:differential}
hold regardless of the micro-foundation.  It matters, however, for
three reasons: \emph{persistence} (heterogeneous beliefs are
self-correcting under learning; the structural sources identified
below are persistent by construction); \emph{welfare} (compressing
belief-driven drift gaps may protect overconfident investors, while
compressing ability-driven gaps destroys allocative efficiency); and
\emph{calibration} ($\gamma$ should be fast under learning and slow
under structural persistence---the decade-scale persistence documented
by \citet{FagerengEtAl2020} favours the latter).  The framework
encompasses heterogeneous beliefs as a special case but primarily
targets the structural sources catalogued in Categories~I--III below.
\end{remark}

\begin{remark}[Category~I: Skill in private markets]
\label{rem:cat_I_skill}
Some entrepreneurs are genuinely better at deploying capital: better
judgement in project selection, superior management, deeper sector
expertise.  These skill differentials generate persistent return
heterogeneity because private markets lack the arbitrage
mechanisms---short-selling, free capital flow, public information---that
enforce return homogeneity in listed equities (\Cref{sec:intro}).
Under this interpretation, $z$ captures productive ability, and the
use-it-or-lose-it mechanism is efficient.  But a subtle point is
already visible: skill generates persistent returns \emph{only
because} the market is inefficient.  The same entrepreneur in a
perfectly efficient market earns zero excess return.  The persistence
is a joint product of skill and market structure; we formalise this
interaction below.
\end{remark}

\begin{remark}[Category~II: Structural advantage in classical private markets]
\label{rem:cat_II_structure}
Return persistence in private markets can also arise from features of
market structure that are orthogonal to skill.  An investor locked
into a specific firm inherits that firm's trajectory---its sector
exposure, its vintage, its local market position---regardless of her
managerial ability.  Scale economies in capital access mean that
larger firms face lower borrowing costs and better terms, compounding
a size advantage that originated in timing or luck rather than skill.
Survivorship selection inflates apparent persistence: firms that drew
unfavourable shocks exit the sample, and among survivors, returns
appear more persistent than in the full population.  Valuation noise
in administrative data---the Norwegian tax valuations of private
businesses may systematically diverge from market values---can further
amplify measured return heterogeneity in precisely the asset class
where it is already largest.

Under this interpretation, the individual fixed effects of
\citet{FagerengEtAl2020} partly reflect firm characteristics (sector,
size, vintage, local market structure) rather than individual skill.
The wealth tax's use-it-or-lose-it mechanism then redistributes
capital based on structural position rather than productive ability---a
reallocation that is at best neutral and at worst reinforcing of
incumbency advantages.
\end{remark}

\begin{remark}[Category~III: Winner-take-all dynamics in digital markets]
\label{rem:cat_III_winner_take_all}
\citet{Rosen1981} showed that when output can be replicated at low
cost, small quality differences translate into large return
differences.  Digital technologies amplify this through near-zero
marginal cost, global distribution, and network effects
\citep{BrynjolfssonMcAfee2014}.  The empirical consequences are
documented by \citet{AutorEtAl2020} (superstar firms account for
declining labour share) and \citet{DeLoeckerEtAl2020} (rising markups
concentrated among the largest firms).  In a winner-take-all market,
two founders of identical ability launching identical products months
apart can earn vastly different returns: the first mover captures the
market; the second gets nothing.  The return to the dominant firm
reflects structural incumbency---network lock-in, data advantages,
switching costs---not ongoing skill.
\end{remark}

\begin{definition}[Three-way additive decomposition of return persistence]
\label{def:z_decomp}
As a first approximation, the framework's ability parameter $z$
conflates contributions from all three sources:
\begin{equation}\label{eq:z_decomp}
  z_i = \underbrace{z_i^{\mathrm{skill}}}_{\text{I: productive ability}}
      + \underbrace{z_i^{\mathrm{structure}}}_{\text{II: market frictions}}
      + \underbrace{z_i^{\mathrm{platform}}}_{\text{III: winner-take-all rents}} \,.
\end{equation}
\end{definition}

\begin{definition}[Skill--risk decomposition and the $\varphi$ parameter]
\label{def:phi_skill_risk}
A further refinement is needed within \Cref{rem:cat_I_skill}: not all
persistent individual-level return heterogeneity reflects productive
ability---some reflects compensation for undiversifiable risk.  The fund performance literature makes this distinction precise:
after controlling for systematic risk factors, the residual alpha is
small in listed equities (\Cref{sec:intro}) but significant and
persistent for top-quartile venture capital.  The skill component
therefore admits a prior decomposition:
\begin{equation}\label{eq:alpha_risk}
  z_i^{\mathrm{skill}} = z_i^{\alpha} + z_i^{\mathrm{risk}} \,,
\end{equation}
where $z_i^{\alpha}$ is genuine alpha (productive ability net of risk
compensation) and $z_i^{\mathrm{risk}}$ is the return premium for
bearing undiversifiable risk---illiquidity, concentration, vintage
exposure.  Define $\varphi \in [0,1]$ as the fraction of observed
skill-component heterogeneity attributable to risk compensation.  At
$\varphi = 0$, all persistence reflects productive ability and the
\citet{GuvenEtAl2023} efficiency argument applies in full.  At
$\varphi = 1$, all persistence reflects risk bearing: the
use-it-or-lose-it mechanism reallocates capital toward investors who
are simply more exposed to compensated risk, with no efficiency gain.
The welfare case for wealth taxation over income taxation scales with
$(1 - \varphi)$, making the empirical decomposition of return
persistence into alpha and risk-factor components a first-order
question for tax design.
\end{definition}

The five-component framework---$z^{\alpha}$, $z^{\mathrm{risk}}$,
$z^{\mathrm{structure}}$, and $z^{\mathrm{platform}}$ (with the
structural component further split into rents and competitive
frictions)---is developed in the companion literature review.  For the
formal results of \Cref{sec:hetero,sec:neutrality,sec:flow_stock},
the coarser three-way decomposition~\eqref{eq:z_decomp} suffices;
the $\varphi$ parameterisation enters through the welfare
interpretation in \Cref{sec:policy} below.

\begin{definition}[Multiplicative skill--structure interaction model]
\label{def:z_interaction}
The additive decomposition~\eqref{eq:z_decomp} is pedagogically useful
but misleading: as noted under \Cref{rem:cat_I_skill}, skill and
market structure are not independent.  A more faithful representation
is multiplicative:
\begin{equation}\label{eq:z_interaction}
  z_i = \theta_i \cdot h(E_j) + z_j^{\mathrm{structure}} + z_j^{\mathrm{platform}} \,,
\end{equation}
where $\theta_i$ denotes individual ability, $E_j$ measures the
informational efficiency of the market segment~$j$ in which
investor~$i$ operates, and $h(\cdot)$ is decreasing in~$E_j$ with
$h(E) \to 0$ as $E \to 1$ (a fully efficient market).  The first
term captures the interaction: skill earns persistent excess returns
only to the extent that the market fails to compete them away.  The
second and third terms capture rents attributable purely to market
position, independent of who occupies that position.
\end{definition}

The non-separability is empirically grounded.  When high-performing
professionals change institutional context, measured performance drops
sharply, consistent with a large $h(E_j)$ amplifier.  The
\citet{AbowdKramarzMargolis1999} decomposition of wages into person
effects, firm effects, and match-quality residuals confirms that the
interaction term is first-order, not a perturbation.

This interaction has a natural interpretation in the Fokker--Planck
framework.  The mean-reversion rate~$\gamma$ in the
Ornstein--Uhlenbeck dynamics~\eqref{eq:z_dynamics} is not a property
of the individual but of the market segment.  In efficient markets
$\gamma$ is large: skill advantages are quickly arbitraged.  In
private markets $\gamma$ is small: the same skill differentials
persist because no competitive mechanism erodes them.  The
\emph{observed} persistence of the skill component is
$\theta_i / \gamma_j$---both the numerator (ability) and the
denominator (market efficiency) matter.

This non-separability has three consequences.

First, the persistence parameter $\gamma$ differs systematically
across the three categories, but for interacting reasons.  Under
Category~I, $\gamma$ is moderate---not because skill itself
mean-reverts, but because the partial arbitrage mechanisms that do
exist (imitation, entry, technology diffusion) gradually erode the
advantage.  Under Category~II, $\gamma$ is small: structural
advantages---local monopolies, relationship capital, regulatory
protections---face fewer erosion mechanisms.  Under Category~III,
$\gamma^{\mathrm{platform}} \approx 0$: once a platform achieves
dominance, the dynamics are self-reinforcing.  The spectral gap of
the joint Fokker--Planck operator shrinks across the three categories,
and the convergence time of the wealth distribution to its new steady
state lengthens correspondingly.

Second, the welfare criterion changes.  When the skill-times-efficiency
interaction~\eqref{eq:z_interaction} dominates, the planner faces
a genuine dilemma: the use-it-or-lose-it mechanism simultaneously
reallocates capital toward genuine productive ability \emph{and}
toward the market segments where structural inefficiency amplifies
that ability into persistent rents.  When structural and platform
rents dominate (the second and third terms), the planner prefers to
compress drift heterogeneity---which, per
Proposition~\ref{prop:differential}, favours flow taxation over stock
taxation.  The optimal tax design therefore depends on the
decomposition of observed return persistence into its components, a
question that is ultimately empirical and likely sector-specific.

Third, the identification problem is genuine.  An empirical
decomposition exploiting the Norwegian administrative data---regressing
individual return fixed effects on firm-level characteristics (sector,
size, age, market concentration)---separates the individual residual
from the structural component, but the individual residual itself
reflects skill \emph{as amplified by} market inefficiency.  A
restaurant owner in Troms\o{} earning persistent 15\% returns may
be both a talented operator and the beneficiary of a two-competitor
local market; these contributions cannot be separated by conditioning
on firm characteristics alone.  The \citet{AbowdKramarzMargolis1999}
person-firm decomposition is more informative: it isolates the
portable component of returns by tracking individuals across firms.
An entrepreneur whose excess returns survive a move to a different
firm (controlling for firm fixed effects) is demonstrating skill that
is at least partially separable from market structure.  But even the
person fixed effect is contaminated if the individual selects into
equally inefficient markets---the portability test requires moves
across markets of varying efficiency, not just across firms.

\subsection{Policy implications: reinterpreting Guvenen}\label{sec:policy}

A common objection to wealth taxation is that it ``punishes skill'':
by taxing the accumulated capital of successful entrepreneurs, it
penalises precisely those who have demonstrated the ability to deploy
capital productively.  This framing reverses the actual mechanism.
Proposition~\ref{prop:differential} shows that the wealth tax
\emph{preserves} skill-driven drift differences---it is the income
tax that compresses them.  Under heterogeneous returns, a
revenue-neutral shift \emph{from} income taxation \emph{to} wealth
taxation widens the drift gap between high- and low-ability investors,
accelerating the reallocation of capital toward the skilled.  This is
\citeauthor{GuvenEtAl2023}'s central finding: a revenue-neutral switch
from capital income taxation to wealth taxation delivers an average
welfare gain of $\sim 7\%$ of consumption-equivalent for newborns,
with the optimal wealth-tax exercise returning
$\tau_a^{\mathrm{OWT}} \approx 3\%$ (positive, optimal) versus an
optimal capital-income tax of $\tau_k^{\mathrm{OKIT}} = -13.6\%$ (a
subsidy, optimal). Critically, when return heterogeneity is shut down
in their robustness experiment~VI ($z_{ih} = 1$), the optimal
capital-income tax flips from a $-13.6\%$ subsidy to a $+25.1\%$
positive tax---the entire case for wealth taxation over capital income
taxation rests on the heterogeneous-returns channel. The theoretical
companion \citet{GuvenEtAl2024} develops an analytically tractable
infinite-horizon model that establishes the use-it-or-lose-it channel,
characterises the optimal mix of wealth and capital-income taxes, and
adds an innovation/effort channel through which the wealth tax further
increases welfare by incentivising high-productivity entrepreneurial
activity.

There is a genuine paradox in this result.  High-ability investors
accumulate more wealth and therefore pay more wealth tax in absolute
terms ($\tau_w W_i$ is larger when $W_i$ is larger)---the observation
that grounds the ``punishes skill'' objection.  But the uniform
reduction $v(z) \mapsto v(z) - \tau_w$ imposes a larger
\emph{relative} burden on low-ability investors
(\Cref{sec:neutrality}), so wealth migrates toward the skilled over
time.  The absolute burden is proportional to the \emph{stock} of
wealth; the incentive structure depends on the \emph{cross-sectional
distribution} of after-tax drifts.  The tax that \emph{appears} to
burden the wealthy---because it is levied on accumulated wealth---is
the one that preserves the reallocation mechanism.  The income tax, by
contrast, may extract nothing from a zero-return investor, but it
compresses the drift gap: the tax that appears gentler to skilled
investors is the one that erodes the reward to skill.

The distinction between the two tax instruments connects to a deeper
asymmetry in the ``silent partner'' metaphor.
\citet{DomarMusgrave1944} and \citet{Stiglitz1969} established that a
proportional income tax with full loss offset acts as a silent partner
on the \emph{return margin}: the government shares proportionally in
both gains and losses on $r_i - r_f$, and the investor responds by
scaling up risk exposure to restore the pre-tax Sharpe ratio.  The
partnership operates through the return distribution.  The wealth tax,
as formalised in \citet{Froeseth2026N}, acts as a silent partner on
the \emph{position margin}: the government takes a proportional slice
of the asset stock itself, reducing the scale of the position by
$(1 - \tw)$ each period while leaving the return distribution per unit
of remaining wealth unchanged.  The income-tax partnership encourages
more risk-taking; the wealth-tax partnership preserves the risk
composition.  When returns reflect risk compensation (the
$z^{\mathrm{risk}}$ component of~\eqref{eq:alpha_risk}), the two
mechanisms diverge sharply: the income tax triggers the
Domar--Musgrave offset, amplifying gross risk exposure; the wealth tax
leaves the portfolio structure intact.  When returns reflect
productive ability ($z^{\alpha}$), both instruments preserve the
Sharpe ratio, but only the wealth tax preserves the cross-sectional
drift gap (Proposition~\ref{prop:differential}).  The $\varphi$
parameterisation thus determines not only the welfare sign of wealth
taxation but also the behavioural channel through which the two
instruments differ.

The genuine difficulty is not that the wealth tax punishes skill, but
that it cannot distinguish skill from other sources of return
persistence.  The drift gap $\mu(z_H) - \mu(z_L)$ is preserved
regardless of its source (\eqref{eq:z_interaction}):

\begin{enumerate}
  \item Under Category~I (skill amplified by market inefficiency),
    the wealth tax accelerates efficient reallocation.  Capital
    moves from less productive to more productive hands.  Output
    rises.  This is the wealth tax as ``capitalism's
    handmaiden''---strengthening the competitive allocation mechanism
    that is capitalism's core claim to efficiency.  But the returns
    that the wealth tax preserves are $\theta_i \cdot h(E_j)$: the
    skill component is entangled with the market-inefficiency
    amplifier.
  \item Under Categories~II and~III (structural advantage and
    winner-take-all), the wealth tax accelerates concentration.
    Capital moves from investors without structural advantages to
    those with them---from small firms to dominant platforms, from new
    entrants to incumbents, from competitive sectors to oligopolistic
    ones.  The wealth tax reinforces the very market structures that
    competition policy seeks to counteract.
  \item In the typical case---a mixture of skill and structure---the
    wealth tax simultaneously rewards genuine productive ability
    \emph{and} amplifies structural rents.  The net welfare sign
    depends on the ratio $\theta_i / \gamma_j$: when skill is high
    relative to the arbitrage rate (talented entrepreneurs in
    moderately inefficient markets), the efficiency gains dominate;
    when structural persistence is extreme relative to skill
    differences (platform monopolies where the identity of the
    incumbent barely matters), the concentration costs dominate.
\end{enumerate}

The second and third cases have implications beyond efficiency.  A
competitive market economy depends on a dispersed ownership structure
where many independent actors compete, and where the price
mechanism---not the identity of the owner---determines allocation.
When the wealth tax systematically channels capital toward firms with
market power, the resulting concentration translates into political
influence, regulatory capture, and the erosion of the competitive
discipline on which the market economy depends.  The wealth tax, in
this case, does not merely fail to redistribute---it actively
concentrates economic power in positions that are already insulated
from competitive pressure.

Proposition~\ref{prop:differential} identifies the precise policy
lever.  Flow taxes compress drift differences; stock taxes preserve
them.  If drift differences arise from productive skill, compressing
them is wasteful---the Guvenen argument against income taxation.  If
drift differences arise from market power or structural incumbency,
compressing them is \emph{desirable}: it is the tax system doing what
competition policy alone cannot, by eroding the return advantage that
market structure confers.  The non-separability of skill and structure
means that any tax instrument necessarily affects both channels: a
flow tax that compresses structural rents also compresses
skill-driven returns, and a wealth tax that preserves skill-driven
returns also preserves structural rents.

The implication is that the optimal mix of flow and stock taxation
depends on an empirical prior: the share of observed return
persistence attributable to each component
of~\eqref{eq:z_interaction}.  In sectors where the
skill-times-inefficiency interaction dominates---traditional
entrepreneurship, professional services, early-stage venture
capital---the wealth tax is the superior instrument despite its
simultaneous preservation of market-inefficiency rents.  In sectors
where pure structural and platform rents dominate---platform markets,
natural monopolies, industries with strong network effects---flow
taxation is preferred.  A uniform wealth tax applied across all
sectors simultaneously gets the welfare sign right in some market
segments and wrong in others.

The framework thus points toward a more nuanced instrument design
than a simple uniform wealth tax: either sector-differentiated rates,
or a complementary combination of wealth taxation (to exploit the
skill channel) and competition policy or excess-profit taxes (to
address structural rents).  The quantitative balance depends on the
decomposition of return persistence in each economy---a question that
the Norwegian and US administrative data are, in principle, rich
enough to approximate even if a clean separation is impossible.

\subsection{Open questions}

Several directions remain open:

\paragraph{Convergence dynamics and the spectral gap.}
\citet{GabaixEtAl2016} prove that, for homogeneous random growth
processes with stabilising forces, the wealth distribution converges
to its Pareto steady state at a rate governed by the spectral gap of
the Kolmogorov forward operator.  For the basic drift--diffusion
process, the spectral gap is $\lambda_2 = \mu^2/(2\sigma^2)$, and
convergence half-lives are typically decades to centuries for
empirically plausible parameters.  This slow convergence is central to
P3's and P4's analysis of wealth tax dynamics.

The joint Fokker--Planck operator~\eqref{eq:fp_joint} is a
two-dimensional linear operator on $(x, z)$ space.  Its spectral gap
is bounded above by the minimum of two components: the
wealth-dimension gap (which, following \citeauthor{GabaixEtAl2016}'s
analysis, depends on the drift-to-diffusion ratio $v(z)/D(z)$ for the
relevant ability type) and the ability-dimension gap (which depends on
the mean-reversion rate~$\gamma$).  When ability is highly persistent
($\gamma$ small), the ability dimension contributes a near-zero
eigenvalue, making convergence even slower than the already-slow
homogeneous case.

Moreover, \citeauthor{GabaixEtAl2016}'s type-dependent analysis
(their Section~5) shows that when agents have different persistent
growth rates~$\mu_j$, the convergence rate is stratified by type:
each type~$j$ converges at its own rate
$\Lambda_j(\xi) = \xi\mu_j - \xi^2\sigma_j^2/2 + \delta$, and the
overall convergence is dominated by the slowest type.  Translated to
the continuous-ability setting of our framework: the joint distribution
converges at a rate determined by the ability type with the smallest
spectral gap.  Low-ability investors, whose drift $v(z_L)$ is small,
converge most slowly---and it is precisely these investors whose
wealth dynamics are most altered by the tax.

A further subtlety is the separation of timescales: the Pareto
exponent of the upper tail (a local property of high-$z$ investors)
adjusts relatively quickly to the wealth tax, while the full
distribution---reflecting the slow reallocation from low-$z$ to
high-$z$ types---adjusts at the rate of the joint spectral gap.
The distributional signature of a wealth tax may therefore appear in
summary statistics long before it manifests in the full distribution.

\paragraph{Spectral portfolio theory and aggregate allocations.}  The
spectral portfolio theory of \citet{Froeseth2026X} establishes that
isotropic perturbations to portfolio allocation matrices---perturbations
that affect all assets uniformly---preserve the spectral structure of
the allocation: the tail exponent of the singular value distribution,
the eigenportfolio directions, and the effective spectral rank are all
invariant.  A proportional wealth tax is exactly such an isotropic
perturbation, and the tax neutrality corollary follows.

Under heterogeneous ability, individual-level spectral invariance
still holds, but the population-level spectral structure---the
wealth-weighted aggregate of all investors' allocation
matrices---changes as $f(x,z,t)$ evolves under~\eqref{eq:fp_joint_tax}.
High-ability investors retain their portfolio allocations while
low-ability investors lose weight in the aggregate, tilting the
aggregate eigenportfolios toward the factor exposures of the able.
The research direction is to extend P7's spectral invariance from a
single-agent result to weighted mixtures of allocation matrices whose
mixing weights evolve according to the joint Fokker--Planck equation.

\paragraph{Mean-field feedback.}  In general equilibrium, asset prices
depend on the wealth distribution, creating a self-consistent
Fokker--Planck equation as in the McKean--Vlasov extension of
\citet{Froeseth2026R}.  With heterogeneous ability, the feedback
structure is richer: the composition of marginal investors affects
prices, which in turn affect the accumulation dynamics.

\paragraph{Optimal taxation.}  The optimal drift design framework of
\citet{Froeseth2026R} seeks the drift modification that minimises the
distance to a target distribution subject to intervention costs.  With
heterogeneous ability, the designer has a richer action space: taxes
can be conditioned on observable proxies for ability (fund
performance, firm profits) in addition to wealth.  \citet{SaezStantcheva2018}
derive sufficient-statistics formulas for optimal capital income
taxation that extend naturally to this setting.
\citet{GerritsenEtAl2025} show that a strictly positive tax on
capital income is Pareto-efficient whenever rates of return are
heterogeneous---whether through ability or scale effects---providing
a normative counterpart to the positive analysis here.  The ``flat
wealth tax'' result of \citet{GuvenEtAl2023}---that a uniform rate
dominates ability-contingent rates---and the broader survey of wealth
tax design in \citet{ScheuerSlemrod2021} suggest robustness properties
that warrant formal investigation within the Fokker--Planck framework.

\paragraph{Quantitative calibration.}  The perturbation parameter
controlling the deviation from neutrality---effectively the variance
and persistence of heterogeneous returns---can in principle be
estimated from the Fagereng et al.\ data.  This would provide a
quantitative measure of how far the homogeneous framework lies from
the empirical truth, broken down by asset class and market segment.

\bibliographystyle{apalike}

\end{document}